   \definecolor{BLACK}{gray}{0}
   \definecolor{WHITE}{gray}{1}
   \definecolor{RED}{rgb}{1,0,0}
   \definecolor{GREEN}{rgb}{0,1,0}
   \definecolor{BLUE}{rgb}{0,0,1}
   \definecolor{CYAN}{cmyk}{1,0,0,0}
   \definecolor{MAGENTA}{cmyk}{0,1,0,0}
   \definecolor{YELLOW}{cmyk}{0,0,1,0}
  \theoremstyle{definition}
  \newtheorem{defn}{\protect\definitionname}
  \theoremstyle{plain}
  \newtheorem{prop}{\protect\propositionname}
\theoremstyle{plain}
\newtheorem{thm}{\protect\theoremname}
\newenvironment{lyxlist}[1]
{\begin{list}{}
{\settowidth{\labelwidth}{#1}
 \setlength{\leftmargin}{\labelwidth}
 \addtolength{\leftmargin}{\labelsep}
 }}
{\end{list}}
  \theoremstyle{remark}
  \newtheorem*{rem*}{\protect\remarkname}
  \providecommand{\definitionname}{Definition}
  \providecommand{\propositionname}{Proposition}
  \providecommand{\remarkname}{Remark}
\providecommand{\theoremname}{Theorem}
\begin{document}

\preprint{}

\preprint{Draft \#1}

\title{On a ($\beta,q$)-generalized Fisher information and inequalities
involving $q$-Gaussian distributions\footnote{This is a preprint version that differs from the published version, J. Math. Phys., vol. 53, issue. 6, 063303, doi:10.1063/1.4726197, in minor revisions, pagination and typographics details.}}

\author{J.-F. Bercher}

\email[To whom correspondence should be addressed: ]{jf.bercher@esiee.fr}

\homepage[]{http://www.esiee.fr/~bercherj}

\selectlanguage{english}%

\affiliation{Laboratoire d'informatique Gaspard Monge, UMR 8049  ESIEE-Paris,
Université Paris-Est}

\begin{abstract}
In the present paper, we would like to draw attention to a possible
generalized Fisher information that fits well in the formalism of
nonextensive thermostatistics. This generalized Fisher information
is defined for densities on $\mathbb{R}^{n}.$ Just as the maximum
Rényi or Tsallis entropy subject to an elliptic moment constraint
is a generalized $q$-Gaussian, we show that the minimization of the
generalized Fisher information also leads a generalized $q$-Gaussian.
This yields a generalized Cramér-Rao inequality. In addition, we show
that the generalized Fisher information naturally pops up in a simple
inequality that links the generalized entropies, the generalized Fisher
information and an elliptic moment. Finally, we give an extended Stam
inequality. In this series of results, the extremal functions are
the generalized $q$-Gaussians. Thus, these results complement the
classical characterization of the generalized $q$-Gaussian and introduce
a generalized Fisher information as a new information measure in nonextensive thermostatistics. 
\end{abstract}

\pacs{{02.50.-r}, {05.90.+m}, {89.70}}

\keywords{Generalized entropies, Fisher information, $q$-Gaussian distributions,
inequalities.}

\maketitle

\section{Introduction\label{sec:Section-I}}

The Gaussian distribution has a central role
with respect to standard information measures. For
instance, it is well known that the Gaussian distribution maximizes the entropy over all
distributions with the same variance; see \citet[Lemma 5]{dembo_information_1991}.
Similarly, the \citet{stam_inequalities_1959} inequality shows that
the minimum of the Fisher information over all distributions with
a given entropy also occurs for the Gaussian distribution. Finally,
the Cramér-Rao inequality, e.g. \citet[Theorem 20]{dembo_information_1991},
shows that the minimum of the Fisher information over all distributions
with a given variance is attained for the Gaussian distribution. 

It is thus natural to inquire for similar results for extended entropies
and an associated generalized Fisher information. In the context of
the nonextensive thermostatistics introduced by \citet{tsallis_possible_1988,tsallis_introduction_2009},
the role devoted to the standard Gaussian distribution is extended
to generalized $q$-Gaussian distributions, see e.g. \citet{lutz_anomalous_2003},
\citet{schwaemmle_q-gaussians_2008}, \citet{vignat_isdetection_2009},
\citet{ohara_information_2010}. The generalized $q$-Gaussians are
also explicit extremal functions of Sobolev, log-Sobolev or Gagliardo\textendash{}Nirenberg
inequalities on $\mathbb{R}^{n},$ with $n\geq2$, see \citet{del_pino_best_2002,del_pino_optimal_2003}.
This family of distributions will precisely achieve the equality case
in all the information inequalities presented in the sequel.  
\begin{defn}
 Let $x$ be a random vector of $\mathbb{R}^{n}.$ For $\alpha\in(0,\infty),$
$\gamma$ a real positive parameter and $q>(n-\alpha)/n,$ the generalized
$q$-Gaussian with scale parameter $\gamma$ has the radially symmetric
probability density 
\begin{equation}
G_{\gamma}(x)=\begin{cases}
\frac{1}{Z(\gamma)}\left(1-\left(q-1\right)\gamma|x|^{\alpha}\right)_{+}^{\frac{1}{q-1}} & \text{for }q\not=1\\
\frac{1}{Z(\gamma)}\exp\left(-\gamma|x|^{\alpha}\right) & \text{if }q=1
\end{cases}\text{ }\label{eq:qgauss_general}
\end{equation}
where $|x|$ denotes the standard euclidean norm, where we use the notation $\left(x\right)_{+}=\mbox{max}\left\{ x,0\right\} $,
and where $Z(\gamma)$ is the partition function such that $G_{\gamma}(x)$
integrates to one. Its expression is given in (\ref{eq:GenPartitionFunction}).
\end{defn}
For $q>1$, the density has a compact support, while for $q\leq1$
it is defined on the whole $\mathbb{R}^{n}$ and behaves as a power
distribution for $|x|\rightarrow\infty.$ Notice that the name generalized
Gaussian is sometimes restricted to the case $q=1$ above. In the
following, we will simply note $G,$ rather than $G_{1},$ the generalized
$q$-Gaussian obtained with $\gamma=1.$ The general expressions of
the main information measures attached to the generalized Gaussians
are derived in Appendix \ref{sec:Information-measures-of}. 

Let us recall that if $x$ is a vector of $\mathbb{R}^{n}$ with probability
density $f(x)$ with respect to the Lebesgue measure, then the information
generating function introduced by \citet{golomb_information_1966}
is defined by 
\begin{equation}
M_{q}[f]=\int f(x)^{q}\mathrm{d}x,\label{eq:DefInfoGeneratingFunction}
\end{equation}
for $q\geq0$. Associated with this information generating function,
the Rényi entropy is given by 
\begin{equation}
H_{q}[f]=\frac{1}{1-q}\log M_{q}[f]\label{eq:DefRenyi}
\end{equation}
while the Tsallis entropy is given by 
\begin{equation}
S_{q}[f]=\frac{1}{1-q}\left(1-M_{q}[f]\right).\label{eq:DefTsallis}
\end{equation}
Finally, we will also call ``entropy power'' of order $q$, or $q$-entropy power, the quantity
\begin{equation}
N_{q}[f]=M_{q}[f]^{\frac{1}{1-q}}=\exp\left(H_{q}[f]\right)=\left(\int f^{q}\text{d}x\right)^{\frac{1}{1-q}},\label{eq:defEntropyPower}
\end{equation}
for $q\neq1.$ For $q=1,$ we let $N_{q}[f]=\exp\left(H_{q}[f]\right).$
Note that the entropy power is usually defined as the square of the
quantity above, with an additional factor. We will also use the definition
of the elliptic moment of order $\alpha$, given by
\begin{equation}
m_{\alpha}[f]=\int|x|^{\alpha}f(x)\mathrm{d}x, \label{eq:DefEllipticMoment}
\end{equation}
where $|x|$ denotes the euclidean norm of $x$. 

In the context of nonextensive thermostatistics, the study of the
role of Fisher information, or the definition of an extension has
attracted some efforts.  Several authors have introduced generalized
versions of the Fisher information and derived the corresponding Cramér-Rao
bounds. Among these contributions we note the series of papers \citet{chimento_naudts-like_2000,casas_fisher_2002,pennini_rnyi_1998,pennini_semiclassical_2007}
where the proposed extended Fisher information involves a power $f^{q}$
of the original distribution $f$, or a normalized version which is
the escort distribution. We shall point out here the early contribution
of P. Hammad \citep{hammad_mesure_1978} that essentially introduced
the previous escort-Fisher information. A modified version, that also
involves escort distributions, is considered by \citet{naudts_generalised_2008,naudts_q-exponential_2009}.
A recent contribution in the nonextensive literature and in this journal
is the proposal of \citet{furuichi_maximum_2009,furuichi_generalized_2010},
where the logarithm is replaced by a deformed one and a Cramér-Rao
type inequality is given. 

In the present paper, we present an extension of the Fisher information
measure, the $(\beta,q)$-Fisher information $I_{\beta,q}[f]$ that
depends on an entropic index $q$ and on a parameter $\beta.$ This
information has been introduced by \citet{lutwak_Cramer_2005}. It
reduces to the standard Fisher information in the case $q=1,$$\beta=2.$
We look for the probability density function with minimum generalized
Fisher information, among all probability density functions with a
given elliptic moment, let us say $m_{\alpha}[f]=m$:
\begin{equation}
I_{\beta,q}(m)=\inf_{f}\{I_{\beta,q}[f]:\,\,\, f(x)\geq0,\,\,\,\int f(x)\mathrm{d}x\mbox{=1\,\,\,\ and\,\,\,}m_{\alpha}[f]=m\}.\label{eq:pb_new1}
\end{equation}
It will turn out that the extremal distribution is precisely the generalized
$q$-Gaussian with the prescribed elliptic moment. Actually, there
is a more precise statement of the result in the form of a Cramér-Rao
inequality that links the Fisher information and the elliptic moment
$m_{\alpha}[f]$. It is shown in section \ref{sec:The-generalized-Stam}
that for $\alpha$ and $\beta$ Hölder conjugates of each other, with
$\alpha\geq1$, then 
\begin{equation}
I_{\beta,q}[f]^{\frac{1}{\beta\lambda}}m_{\alpha}[f]^{\frac{1}{\alpha}}\geq I_{\beta,q}[G]^{\frac{1}{\beta\lambda}}m_{\alpha}[G]^{\frac{1}{\alpha}}\label{eq:genCR}
\end{equation}
with $\lambda=n(q-1)+1,$ and where the equality is obtained for any
generalized $q$-Gaussian distribution $f=G_{\gamma}$. This inequality
means in particular that the minimum of the generalized Fisher information
subject to an elliptic moment constraint is reached by a generalized
Gaussian. This complements the classical characterization of the generalized
$q$-Gaussian that are found as the maximum Rényi or Tsallis entropy
distributions with a given moment. Incidentally, we will also find,
from a general sharp Gagliardo-Nirenberg inequality, that the generalized
entropy power and Fisher information are linked by an inequality that
generalizes the Stam inequality
\begin{equation}
N_{q}[f]I_{\beta,q}[f]^{\frac{n}{\beta\lambda}}\geq N_{q}[G]I_{\beta,q}[G]^{\frac{n}{\beta\lambda}},\label{eq:genStam}
\end{equation}
again with $\lambda=n(q-1)+1$. 

Actually, the inequalities (\ref{eq:genCR}) and (\ref{eq:genStam})
have been presented in the monodimensional case in the remarkable
paper by \citet{lutwak_Cramer_2005} whose reading has deeply influenced
this work. Beyond the highlighting on the result in the nonextensive
statistics context, a contribution of the present paper is the extension
of the definitions and results in the multidimensional case. Furthermore,
we show in section \ref{sec:A-Fisher-Moment-entropy-inequali}, that
the expression of the generalized Fisher information naturally pops
up from a simple application of Hölder's inequality, which leads to
an inequality that links the generalized Fisher information, the Rényi
entropy power and the elliptic moment. We show that this inequality,
which is saturated by the generalized $q$-Gaussian, immediately suggests
the Cramér-Rao inequality (\ref{eq:genCR}). This Cramér-Rao inequality
is derived in section \ref{sec:The-generalized-Stam} with the help
of the generalized Stam inequality (\ref{eq:genStam}).

\section{\label{sec:On-the-generalized}On the generalized $(\beta,q)$-Fisher
information}

We begin by the definition of the generalized $(\beta,q)$-Fisher
information and its main consequences. 
\begin{defn}
Let $f(x)$ be a probability density function defined over a subset
$\Omega$ of $\mathbb{R}^{n}$. Let $|x|$ denote the euclidean norm
of $x$ and $\nabla f$ the gradient of function $f(x)$ with respect to $x$. If $f(x)$ is continuously
differentiable over $\Omega,$ then for $q\geq0$, $\beta>1$, the
generalized $(\beta,q)$-Fisher information is defined by
\end{defn}
\begin{equation}
I_{\beta,q}[f]=\int_{\Omega}f(x)^{\beta(q-1)+1}\left(\frac{|\nabla f(x)|}{f(x)}\right)^{\beta}\mathrm{d}x.\label{eq:GenFisherDefinition}
\end{equation}

The standard Fisher information ($q=1,$ $\beta=2)$, see for instance
\citet{cohen_fisher_1968,huber_robust_2009,plastino_fisher_2005}
is a convex functional of the probability density $f.$ It was shown
by \citet{boekee_extension_1977} that this is also true for the generalized
Fisher information with $q=1$ and any $\beta>1.$ It is also easy
to check that this is still true in the case $\beta(q-1)+1=\beta$.
Unfortunately, the general characterization of the convexity or pseudo-convexity
properties of the generalized Fisher information remains an open problem.

A radially symmetric function $f(x)$ only depends on the euclidean distance  $r=|x|$ of the argument from the origin. 
Hence, in the case of radially symmetric function defined on a symmetric
domain $\Omega,$ we have $f(x)=f_{r}(|x|)=f_{r}(r)$, where $f_{r}$
is an univariate probability density defined on a subset $\Omega_{r}$
of $\mathbb{R}^{+}.$ With these notations, we simply have $|\nabla f(x)|=\left|\frac{\mathrm{d}f_{r}(r)}{\mathrm{d}r}\right|$.
In polar coordinates, with $r=|x|,$ $\text{d}x=r^{n-1}\text{d}r\text{d}u$
and $\int\text{d}u=n\,\omega_{n},$ where $\omega_{n}$ is the volume
of the $n$-dimensional unit ball, the expression of the generalized
Fisher information becomes
\[
I_{\beta,q}[f]=n\,\omega_{n}\,\int_{\Omega_{r}}r^{n-1}f_{r}(r)^{\beta(q-1)+1}\left|\frac{f_{r}'(r)}{f_{r}(r)}\right|^{\beta}\mathrm{d}r.
\]

In the following, it will also be convenient to use the simple transformation
$f(x)=u(x)^{k}$, with $k=\beta/\left(\beta(q-1)+1\right)$. Indeed,
with this notation and taking into account that $|\nabla f|=|k|u^{k-1}|\nabla u|,$
the generalized Fisher information reduces to
\begin{equation}
I_{\beta,q}[f]=|k|^{\beta}\int_{\Omega}|\nabla u(x)|^{\beta}\mathrm{d}x,\label{eq:GenFisherDefinition_reduced}
\end{equation}
which corresponds to the $\beta$-Dirichlet energy of $u(x)$. In
the case $k=2,$ the function $u(x)$ is analog to a wave function
in quantum mechanics.

The variational problem (\ref{eq:pb_new1}) can be restated as follows:
\begin{equation}
I_{\beta,q}(m)=\inf_{u}\{\int_{\Omega}|\nabla u(x)|^{\beta}\mathrm{d}x:\,\,\,\int_{\Omega}u(x)^{k}\mathrm{d}x=1\mbox{\,\,\,\ and\,\,\,}m=\int_{\Omega}|x|^{\alpha}u(x)^{k}dx\}.\label{eq:pbnew1_bis}
\end{equation}
The Lagrangian associated with the problem (\ref{eq:pbnew1_bis})
is 
\begin{equation}
L(u;a,b)=\int_{\Omega}|\nabla u(x)|^{\beta}\mathrm{d}x+a\int_{\Omega}u(x)^{k}\mathrm{d}x+b\int_{\Omega}|x|^{\alpha}u(x)^{k}dx
\end{equation}
where $a$ and $b$ are the Lagrange parameters associated with the
two constraints. It is remarkable to note that the same kind of formulation
has been introduced in the book by \citet{frieden_sciencefisher_2004},
and solutions derived in the cases $k=2$ and $k=1$, with the mention
of the physical interest of the latter case. The Euler-Lagrange 
equation, see \citet{brunt_calculus_2004}, corresponding to the variational problem (\ref{eq:pbnew1_bis}) is 
\begin{equation}
\mathrm{div}\left(|\nabla u(x)|^{\beta-2}\nabla u(x)\right)-\frac{k}{\beta}\,\left(a+b|x|^{\alpha}\right)\, u(x)^{k-1}=0,\label{eq:LaplaceEquation}
\end{equation}
It has the form of a $p$-Laplace equation as in \citet{lindqvist_notes_2006}, here with $p=\beta$, 
\[
\triangle_{\beta}u(x)-\frac{k}{\beta}\,\left(a+b|x|^{\alpha}\right)\, u(x)^{k-1}=0,
\]
where $\triangle_{\beta}$ denotes the $\beta$-Laplacian operator.
Observe that in the special case $k=2,$ (\ref{eq:LaplaceEquation})
is analogous to the Schrödinger equation for the quantum oscillator
in a relativistic setting (Klein-Gordon equation). In such case, it
is known that the ground state is a normal distribution. In the radial
case, with $r=|x|$ and $u(x)=u_{r}(r),$ the nonlinear second order
differential equation reduces to 
\begin{equation}
\left(r^{n-1}u_{r}'(r)^{\beta-1}\right)'-\frac{k}{\beta}\, r^{n-1}\left(a+br^{\alpha}\right)\, u_{r}(r)^{k-1}=0.\label{eq:emden-fowler-sca}
\end{equation}
In our context, it is thus natural to suspect that the generalized
$q$-Gaussian could be a solution. Actually, it is easy to check that
if $G_{\gamma}$ is the generalized $q$-Gaussian (\ref{eq:qgauss_general}),
then $u(x)=G_{\gamma}(x)^{\frac{1}{k}}$ is indeed a solution of this
equation, with the following relations between the parameters:

\begin{equation}
a=-An,\,\, b=A(1+n(q-1))\gamma\text{ \,\ where\,\ }A=\left(\frac{\beta}{k}\right)^{\beta}\left(\frac{\gamma}{\beta-1}\right)^{\beta-1}Z(\gamma)^{k-\beta}.
\end{equation}

It will be shown in the following that the generalized Gaussian is
not only a possible solution, but the actual optimum solution of the
problem. 

Let us finally note that in the case $\beta=2,$ the nonlinear differential
equation (\ref{eq:emden-fowler-sca}) is an instance of the generalized
Emden-Fowler equation $u''(x)+h(x)\, u(x)^{\gamma}=0$, where $h(x)$
is a given function. This kind of equations arise in studies of gaseous
dynamics in astrophysics, in certain problems in fluid mechanics and
pseudoplastic flow, see \citet{nachman_nonlinear_1980}, 
as well as in some reaction-diffusion processes, 
as mentioned by \citet{covei_existence_2010}. 

In the radial case, the minimum Fisher information can be written
in terms of the constraints and of the Lagrange parameters associated
with these constraints. 
\begin{prop}
\label{pro:minFishLag}Let $\Omega$ be the $n$-dimensional ball
of radius $R$, possibly infinite, centered on the origin. Among all
radial densities $f=u^{k}$ defined on $\Omega$ and such that $f_{r}(R)=0,$
the minimum Fisher information in (\ref{eq:pbnew1_bis}) can be expressed
in terms of the constraints and the Lagrange multipliers as 
\begin{equation}
\frac{1}{|k|^{\beta}}I_{\beta,q}(m)=-\frac{k}{\beta}\left(a+bm\right).\label{eq:minFish}
\end{equation}
\end{prop}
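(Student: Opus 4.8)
The plan is to evaluate the Dirichlet energy $\int_{\Omega}|\nabla u|^{\beta}\,\mathrm{d}x=\frac{1}{|k|^{\beta}}I_{\beta,q}[f]$ (cf. (\ref{eq:GenFisherDefinition_reduced})) at the optimizer by exploiting the fact that the optimizer solves the Euler--Lagrange equation (\ref{eq:LaplaceEquation}). The essential device is a ``multiply by $u$ and integrate by parts'' (virial-type) identity, which converts the Dirichlet energy into the two constraint integrals weighted by the Lagrange multipliers $a$ and $b$. Since everything is radial, I would first pass to the reduced one-dimensional equation (\ref{eq:emden-fowler-sca}) on $(0,R)$, where the Fisher information reads $\frac{1}{|k|^{\beta}}I_{\beta,q}[f]=n\,\omega_{n}\int_{0}^{R}r^{n-1}|u_r'(r)|^{\beta}\,\mathrm{d}r$.

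First I would take the radial Euler--Lagrange equation (\ref{eq:emden-fowler-sca}), multiply it by $u_r(r)$, and integrate over $(0,R)$. The only nontrivial manipulation is the leading term $\int_{0}^{R}u_r\,(r^{n-1}|u_r'|^{\beta-2}u_r')'\,\mathrm{d}r$, which I integrate by parts once. This yields the boundary contribution $\bigl[u_r\,r^{n-1}|u_r'|^{\beta-2}u_r'\bigr]_{0}^{R}$ together with $-\int_{0}^{R}r^{n-1}|u_r'|^{\beta-2}(u_r')^{2}\,\mathrm{d}r=-\int_{0}^{R}r^{n-1}|u_r'|^{\beta}\,\mathrm{d}r$, where I have used the elementary identity $|u_r'|^{\beta-2}(u_r')^{2}=|u_r'|^{\beta}$.

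Next I would argue that the boundary term vanishes. At $r=R$ the hypothesis $f_r(R)=0$ forces $u_r(R)=0$ (since $f=u^{k}$ with $k>0$), killing the upper endpoint; at $r=0$ the factor $r^{n-1}$ vanishes for $n\ge 2$, while for $n=1$ one invokes the radial symmetry condition $u_r'(0)=0$. What remains, after restoring the factor $n\,\omega_{n}$, is the identity $-n\,\omega_{n}\int_{0}^{R}r^{n-1}|u_r'|^{\beta}\,\mathrm{d}r=\frac{k}{\beta}\,n\,\omega_{n}\int_{0}^{R}r^{n-1}\bigl(a+br^{\alpha}\bigr)u_r^{k}\,\mathrm{d}r$. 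I would then recognize the left-hand side as $-\frac{1}{|k|^{\beta}}I_{\beta,q}[f]$ and, on the right, recover the two constraints $n\,\omega_{n}\int_{0}^{R}r^{n-1}u_r^{k}\,\mathrm{d}r=\int_{\Omega}u^{k}\,\mathrm{d}x=1$ and $n\,\omega_{n}\int_{0}^{R}r^{n-1}r^{\alpha}u_r^{k}\,\mathrm{d}r=\int_{\Omega}|x|^{\alpha}u^{k}\,\mathrm{d}x=m$, so the right-hand side equals $\frac{k}{\beta}(a+bm)$. Rearranging gives (\ref{eq:minFish}) precisely.

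The step I expect to demand the most care is the vanishing of the boundary term at the origin, together with the tacit regularity of the critical profile (that $r^{n-1}|u_r'|^{\beta-1}\to 0$ as $r\to 0$), since the $\beta$-Laplacian is degenerate and the behaviour of $u_r'$ at the center must be controlled; the sign convention in passing from $u_r'(r)^{\beta-1}$ in (\ref{eq:emden-fowler-sca}) to $|u_r'(r)|^{\beta-2}u_r'(r)$ for a decreasing profile also needs to be handled consistently. Everything else is a direct substitution of the prescribed constraint values, so the whole content of the proposition is really this single integration-by-parts identity evaluated at a solution of the Euler--Lagrange equation.
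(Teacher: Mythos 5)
Your proposal is correct and is essentially the paper's own proof: both arguments amount to the single integration-by-parts identity relating $n\,\omega_{n}\int_{0}^{R}r^{n-1}|u_{r}'|^{\beta}\,\mathrm{d}r$ to $\int_{0}^{R}u_{r}\,\bigl(r^{n-1}u_{r}'^{\,\beta-1}\bigr)'\,\mathrm{d}r$, followed by substitution of the Euler--Lagrange equation (\ref{eq:emden-fowler-sca}) and of the two constraint values; you simply read the identity starting from the equation (multiply by $u_{r}$ and integrate) rather than starting from the energy, and your explicit treatment of the boundary term at the origin is a detail the paper leaves implicit.
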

\begin{proof}
By integration by parts, 
\begin{alignat}{1}
n\,\omega_{n}\,\int_{0}^{R}r^{n-1}\left(u_{r}'(r)^{\beta}\right)\mathrm{d}r & =n\,\omega_{n}\,\int_{0}^{R}r^{n-1}\left(u_{r}'(r)^{\beta-1}\right)u_{r}'(r)\mathrm{d}x\label{eq:toto}\\
= & n\,\omega_{n}\,\left[\left(r^{n-1}u_{r}'(x)^{\beta-1}\right)\, u_{r}(x)\right]_{0}^{R}-n\,\omega_{n}\,\int_{0}^{R}u_{r}(r)\left(r^{n-1}u_{r}'(r)^{\beta-1}\right)'\mathrm{d}x
\end{alignat}
Using the boundary condition and the differential equation (\ref{eq:emden-fowler-sca}),
we then obtain
\[
n\,\omega_{n}\,\int_{0}^{R}r^{n-1}\left(u_{r}'(r)^{\beta}\right)\mathrm{d}r=-n\,\omega_{n}\,\frac{k}{\beta}\int_{0}^{R}r^{n-1}\left(a+br^{\alpha}\right)\, u_{r}(r)^{k}\mathrm{d}x.
\]
Taking into account the values of the constraints, this reduces to
(\ref{eq:minFish}) . 
\end{proof}
It can be noted that the Lagrange parameters are actually (complicated)
functions of the constraints, so that the right hand side of (\ref{eq:minFish})
is not a simple affine function in $m$. 

%

\section{\label{sec:A-Fisher-Moment-entropy-inequali}A Fisher-Moment-entropy
inequality and the extended Cramér-Rao inequality}

We show here that the generalized Fisher information naturally pops
up in a simple inequality that links this generalized Fisher information,
the information generating function and the elliptic moment of order
$\alpha.$ Then, we show that this inequality suggests the general
Cramér-Rao inequality (\ref{eq:genCR}). 
\begin{thm}
\label{theo_Fisher_moment_entropy}Let $f(x)=f_{r}(|x|)$ a radially
symmetric probability density on the $n$-dimensional ball $\Omega$
of radius $R$, possibly infinite, centered on the origin. Assume
that the density is absolutely continuous and such that $\lim_{r\rightarrow R}r^{n}\, f_{r}(r)^{q}=0$.
Let also $\alpha\geq1$ and $\beta$ be its Hölder conjugate. Then,
for $q>n/(n+\alpha)$ and provided that the involved information measures
are finite, we have 
\begin{equation}
I_{\beta,q}[f]^{\frac{1}{\beta}}\, m_{\alpha}[f]^{\frac{1}{\alpha}}\geq\frac{n}{q}M_{q}[f]\label{eq:Ineg_Fisher_Renyi_moment}
\end{equation}
with equality if and only if $f$ is a generalized Gaussian $f=G_{\gamma}$
for any $\gamma>0.$ \end{thm}
\begin{proof}
Let us consider the information generating function $M_{q}[f].$ Since
the density is assumed radially symmetric, the use of polar coordinates
reduces the computation to the evaluation of an univariate integral.
By integration by parts, we obtain
\begin{align}
M_{q}[f] & =\!\int_{\Omega}\! f(x)^{q}\text{d}x=n\omega_{n}\int_{0}^{R}\! r^{n-1}f_{r}(r)^{q}\text{d}r=n\omega_{n}\left[\frac{r^{n}}{n}\, f_{r}(r)^{q}\right]_{0}^{R}-n\omega_{n}q\!\int_{0}^{R}\!\frac{r^{n}}{n}\,{f_{r}'}(r)\, f_r(r)^{q-1}\text{d}r\\
 & =n\omega_{n}\frac{q}{n}\!\int_{0}^{R}\!\left[r\right]\,\left[-\frac{{f}_{r}'(r)}{f_{r}(r)}f_{r}(r)^{\left(q-1\right)}\right]\, r^{n-1}\, f_{r}(r)\text{d}r \label{eq:remainingintegral}
\end{align}
where we used the fact that $\lim_{r\rightarrow R}r^{n}\, f_{r}(r)^{q}=0$.
We can then use the fact that $\left|\int g(x) \text{d}x \right| \leq \int \left|g(x)\right| \text{d}x$ with equality iff $g(x)\geq 0$, and the Hölder inequality which states that 
\[ 
\left(\int |u(x)|^\alpha w(x) \text{d}x\right)^\frac{1}{\alpha}  \left(\int |v(x)|^\beta w(x) \text{d}x\right)^\frac{1}{\beta} \geq \int |u(x)v(x)| w(x) \text{d}x
\]
with $\alpha^{-1}+\beta^{-1}=1$, $\alpha>1$ and  where $w(x)\geq0$ is a weight function. The equality case then occurs iff $|u(x)|^\alpha = k |v(x)|^\beta$, with $k>0$. 

Let us now apply these inequalities to the remaining integral in (\ref{eq:remainingintegral}), with $g(r)=u(r)v(r)w(r)$, $u(r)=r$, $v(r)=-\frac{{f}_{r}'(r)}{f_{r}(r)}f_{r}(r)^{\left(q-1\right)}$,  and $w(r)=r^{n-1}\, f_{r}(r)$.  Immediately, we obtain
\begin{equation}
\left(\! n\omega_{n}\int_{0}^{R}\!\left|\frac{{f_{r}'}(r)}{f_{r}(r)}\right|^{\beta}f_{r}(r)^{\beta(q-1)}\, r^{n-1}\, f_{r}(r)\text{d}r\right)^{\frac{1}{\beta}} \left(n\omega_{n}\int_{0}^{R}\! r^{\alpha}\, r^{n-1}\, f_{r}(r)\text{d}r\right)^{\frac{1}{\alpha}} \geq\!\frac{n}{q}M_{q}[f]\label{eq:HolderIneq1-1}
\end{equation}
which is inequality (\ref{eq:Ineg_Fisher_Renyi_moment}) in polar
coordinates. 

The conditions for equality give here ${f_r'}(r)<0$  on the one hand, and $r^{\alpha}=k\left|\frac{{f}_{r}'(r)}{f_{r}(r)}\right|^{\beta}f_{r}(r)^{\beta(q-1)}$, $k>0,$ on the other hand. Finally,
using the fact that $\alpha/\beta=\alpha-1$, the first-order nonlinear
differential equation reduces to 
\begin{equation}
r^{\alpha-1}f_{r}(r)^{2-q}=-K\,{f_{r}'}(r).\label{eq:DiffEquationGenGauss}
\end{equation}
Then, it is easy to check that the unique normalized solution of the
differential equation (\ref{eq:DiffEquationGenGauss}) is nothing
but the generalized Gaussian (\ref{eq:qgauss_general}).
\end{proof}
We shall mention that in the monodimensional case, the radial symmetry
hypothesis is not necessary, as the integration can be achieved on
the real line, and the previous result holds for all probability densities.
In the case $q=1$, the information generating function in the right
side of (\ref{eq:Ineg_Fisher_Renyi_moment}) equals to 1, so that
the inequality reduces to $I_{\beta,1}[f]^{\frac{1}{\beta}}\, m_{\alpha}[f]^{\frac{1}{\alpha}}\geq1$
which is a Cramér-Rao inequality that has been exhibited by \citet{boekee_extension_1977}.

As a direct consequence, we obtain a possible derivation of the more
general Cramér-Rao inequality (\ref{eq:genCR}). The idea is to lower
bound the right side of (\ref{eq:Ineg_Fisher_Renyi_moment}). Let
us first consider the case $q>1.$ In this case, $M_{q}[f]$ is a
convex functional, and therefore has a single minimizer among all
densities with a given moment. It is well known that this minimizer
is a generalized $q$-Gaussian $G_{\theta}(x)$ with the same moment:
\[
M_{q}[f]\geq\inf_{p/m_{\alpha,1}[p]=m_{\alpha,1}[f]}M_{q}[p]=M_{q}[G_{\theta}].
\]
Therefore, in the case $q>1$ the inequality (\ref{eq:Ineg_Fisher_Renyi_moment})
also yields 
\begin{equation}
I_{\beta,q}[f]^{\frac{1}{\beta}}\, m_{\alpha,1}[f]^{\frac{1}{\alpha}}\geq\frac{n}{q}M_{q}[G_{\theta}],\label{eq:petitCramer}
\end{equation}
with equality if and only if $f=G_{\theta}$. Of course, the right
hand side term can also be written $I_{\beta,q}[G_{\theta}]^{\frac{1}{\beta}}\, m_{\alpha,1}[G_{\theta}]^{\frac{1}{\alpha}},$
or $I_{\beta,q}[G_{\theta}]^{\frac{1}{\beta}}\, m_{\alpha,1}[G_{\theta}]^{\frac{\lambda}{\alpha}}\, m_{\alpha,1}[G_{\theta}]^{\frac{1-\lambda}{\alpha}},$
with $\lambda=\left(n\left(q-1\right)+1\right)$. Therefore, the inequality
(\ref{eq:petitCramer}) becomes
\[
I_{\beta,q}[f]^{\frac{1}{\beta}}\, m_{\alpha,1}[f]^{\frac{1}{\alpha}}\, m_{\alpha,1}[G_{\theta}]^{-\frac{1-\lambda}{\alpha}}\geq I_{\beta,q}[G_{\theta}]^{\frac{1}{\beta}}\, m_{\alpha,1}[G_{\theta}]^{\frac{\lambda}{\alpha}}.
\]
Since the optimum distribution $G_{\theta}$ is such that $m_{\alpha,1}[f]=m_{\alpha,1}[G_{\theta}],$
we get
\begin{equation}
I_{\beta,q}[f]^{\frac{1}{\beta}}\, m_{\alpha,1}[f]^{\frac{\lambda}{\alpha}}\geq I_{\beta,q}[G_{\theta}]^{\frac{1}{\beta}}\, m_{\alpha,1}[G_{\theta}]^{\frac{\lambda}{\alpha}}.\label{eq:ptitCramerDeux}
\end{equation}
By the scaling properties (\ref{eq:ScaleEqualities}), the right hand
side does not depend on $\theta,$ so that the bound is attained for
any generalized Gaussian distribution. Finally, for $q\geq1,$ we
have $\lambda>0$ and (\ref{eq:ptitCramerDeux}) gives the Cramér-Rao
inequality (\ref{eq:genCR}). 

A similar approach holds in the case $q<1$ where we use the fact
that the minimum of Fisher information among distributions with a
given generalized Fisher information is attained by the generalized
$q$-Gaussian distribution with the same Fisher information. This
statement is a consequence of a generalized Stam inequality which
will be explicited in Proposition\,\ref{Prop_GeneralizedStam}. Hence,
we have
\[
M_{q}[f]\geq\inf_{p/I_{\beta,q}[p]=I_{\beta,q}[f]}\, M_{q}[p]=M_{q}[G_{\theta}]
\]
 and we can apply the same approach as above to obtain

\[
I_{\beta,q}[f]^{\frac{1}{\beta}}\, m_{\alpha,1}[f]^{\frac{1}{\alpha}}\, I_{\beta,q}[G_{\theta}]^{\frac{1}{\beta\lambda}(1-\lambda)}\,\geq I_{\beta,q}[G_{\theta}]^{\frac{1}{\beta\lambda}}\, m_{\alpha,1}[G_{\theta}]^{\frac{1}{\alpha}}.
\]
 Then, since $\theta$ is such that $I_{\beta,q}[f]=I_{\beta,q}[G_{\theta}]$
and since the right hand side does not depend on $\theta,$ we see
that the generalized Cramér-Rao (\ref{eq:genCR}) inequality is also
true for $q<1.$ This is summarized in the following Proposition. 
\begin{prop}
\label{prop_cramer-Rao-inequality}{[}Cramér-Rao inequality for radially
symmetric densities{]} If the conditions in Theorem \ref{theo_Fisher_moment_entropy}
hold, then we also have 
\begin{equation}
I_{\beta,q}[f]^{\frac{1}{\beta\lambda}}m_{\alpha}[f]^{\frac{1}{\alpha}}\geq I_{\beta,q}[G]^{\frac{1}{\beta\lambda}}m_{\alpha}[G]^{\frac{1}{\alpha}},\label{eq:CR_generalized-1}
\end{equation}
with $\lambda=n(q-1)+1$, and where the equality holds if and only
if $f$ is a generalized Gaussian $f=G_{\gamma}$. 
\end{prop}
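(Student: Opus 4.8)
The plan is to derive the Cramér–Rao inequality (\ref{eq:CR_generalized-1}) directly from the Fisher-moment-entropy inequality (\ref{eq:Ineg_Fisher_Renyi_moment}) established in Theorem~\ref{theo_Fisher_moment_entropy}, following the two-case strategy already laid out in the discussion immediately preceding the Proposition. The starting point is the bound
\begin{equation*}
I_{\beta,q}[f]^{\frac{1}{\beta}}\, m_{\alpha}[f]^{\frac{1}{\alpha}}\geq\frac{n}{q}M_{q}[f],
\end{equation*}
and the key idea is that the right-hand side can be lower bounded by replacing $f$ with an appropriately matched generalized $q$-Gaussian $G_\theta$, after which the scaling identities (\ref{eq:ScaleEqualities}) collapse everything to a $\theta$-independent constant equal to the extremal value $I_{\beta,q}[G]^{\frac{1}{\beta\lambda}}m_{\alpha}[G]^{\frac{1}{\alpha}}$.

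First I would treat the case $q>1$. Here $M_q[f]=\int f^q\,\mathrm{d}x$ is a convex functional of $f$, so among all densities with a prescribed elliptic moment $m_\alpha[f]=m$ it attains its minimum at the generalized $q$-Gaussian $G_\theta$ with that same moment; this is the standard maximum-Tsallis/Rényi-entropy characterization recalled in the introduction. Substituting $M_q[f]\geq M_q[G_\theta]$ into (\ref{eq:Ineg_Fisher_Renyi_moment}) and rewriting the right-hand side as $I_{\beta,q}[G_\theta]^{1/\beta}m_\alpha[G_\theta]^{1/\alpha}$ (legitimate because $G_\theta$ saturates the inequality), I would then insert the factorization $m_\alpha[G_\theta]^{1/\alpha}=m_\alpha[G_\theta]^{\lambda/\alpha}\,m_\alpha[G_\theta]^{(1-\lambda)/\alpha}$ with $\lambda=n(q-1)+1$. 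Using $m_\alpha[f]=m_\alpha[G_\theta]$ to rebalance the moment exponents yields (\ref{eq:ptitCramerDeux}), and raising to the appropriate power together with the scaling properties (\ref{eq:ScaleEqualities}) — which guarantee that $I_{\beta,q}[G_\theta]^{1/(\beta\lambda)}m_\alpha[G_\theta]^{1/\alpha}$ is invariant under the scale parameter — gives (\ref{eq:CR_generalized-1}). Since $\lambda>0$ for $q\geq1$, the manipulation of exponents preserves the direction of the inequality.

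For the case $q<1$ the convexity of $M_q$ fails, so I would instead lower bound $M_q[f]$ among all densities with a given \emph{generalized Fisher information} $I_{\beta,q}[f]$, the minimizer again being a generalized $q$-Gaussian $G_\theta$ with matching Fisher information. This substitution relies on the generalized Stam inequality of Proposition~\ref{Prop_GeneralizedStam}, which I am free to invoke. Running the identical algebraic rearrangement — now holding $I_{\beta,q}[f]=I_{\beta,q}[G_\theta]$ fixed rather than the moment — and again exploiting the scale invariance of the extremal constant, I recover (\ref{eq:CR_generalized-1}) for $q<1$ as well. In both cases equality propagates exactly: it holds iff $f=G_\theta$, which by the scaling freedom means iff $f$ is a generalized Gaussian $f=G_\gamma$ for some $\gamma>0$.

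The main obstacle, and the only genuinely nontrivial input, is the $q<1$ branch, because there one cannot appeal to convexity of $M_q$ and must instead borrow the variational characterization underlying the Stam inequality (Proposition~\ref{Prop_GeneralizedStam}); establishing \emph{that} the extremizer is the $q$-Gaussian is the substantive work, whereas the reduction carried out here is essentially bookkeeping. The remaining care is purely in exponent arithmetic: one must verify that the scaling relations (\ref{eq:ScaleEqualities}) indeed make $I_{\beta,q}[G_\theta]^{1/(\beta\lambda)}m_\alpha[G_\theta]^{1/\alpha}$ independent of $\theta$, and that $\alpha/\beta=\alpha-1$ together with $\lambda=n(q-1)+1$ combine so that the powers match on both sides — a check that, while routine, is where sign errors in $1-\lambda$ would otherwise slip in.
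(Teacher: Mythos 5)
Your proposal is correct and follows essentially the same route as the paper: lower-bounding $M_q[f]$ in the Fisher--moment--entropy inequality by $M_q[G_\theta]$ with the moment matched (via convexity) for $q>1$ and with the Fisher information matched (via the generalized Stam inequality of Proposition~\ref{Prop_GeneralizedStam}) for $q<1$, then using the scaling identities (\ref{eq:ScaleEqualities}) to make the bound $\theta$-independent. The exponent bookkeeping and the equality discussion also match the paper's argument.
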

Let us recall that this generalized Cramér-Rao inequality has been
exhibited by \citet{lutwak_Cramer_2005} in the monodimensional case.
The proposition above is restricted to radially symmetric densities
since the initial Fisher-Moment-entropy inequality in Theorem\,\ref{theo_Fisher_moment_entropy}
rests on this hypothesis. However, symmetrization arguments show that
the Cramér-Rao inequality (\ref{eq:CR_generalized-1}) should hold
more generally. Let $v$ be a function defined on a subset $\Omega$
of $\mathbb{R}^{n}$ and note $v^{*}$ the symmetric decreasing rearrangement
(Schwarz symmetrization) of $v$, and $v_{*}$ its symmetric increasing
rearrangement. The rearrangement inequality of \citet{hardy_inequalities_1988},
see also \citet{kesavan_symmetrization_2006} or \citet{r.d._benguria_isoperimetric_2008},
states that $\int_{\Omega}fg\geq\int_{\Omega}f^{*}g_{*}$. This can
be applied directly to the elliptic moment $m_{\alpha}[f]$ with $g(x)=|x|^{\alpha}.$
Since $|x|^{\alpha}$ is symmetric increasing, it is its own increasing
rearrangement and we obtain that $m_{\alpha}[f]\geq m_{\alpha}[f^{*}],$
which seems indeed a natural inequality. In addition, the famous Pólya\textendash{}Szeg\H{o}
inequality states that $\int_{\Omega}|\nabla u|^{\beta}\geq\int_{\Omega}|\nabla u^{*}|^{\beta}$.
Therefore, with $f=u^{k}$ and using the fact that $f^{*}=(u^{*})^{k}$,
we obtain that $I_{\beta,q}[f]\geq I_{\beta,q}[f^{*}].$ Hence, we
obtain that for any $f$ we always have 
\[
I_{\beta,q}[f]^{\frac{1}{\beta\lambda}}m_{\alpha}[f]^{\frac{1}{\alpha}}\geq I_{\beta,q}[f^{*}]^{\frac{1}{\beta\lambda}}m_{\alpha}[f^{*}]^{\frac{1}{\alpha}},
\]
which means that the minimizer of the left side is necessarily radially
symmetric and decreasing. As a result, the extremal function can be
sought in the subset of radially symmetric probability densities.
Unfortunately, the definition of the increasing rearrangement requires
that $\Omega$ is a compact subset of $\mathbb{R}^{n}$, so that the
argument fails for densities with non compact support (which corresponds
to the case $q<1$ of the extremal function). However, this still
shows that the Cramér-Rao inequality (\ref{eq:CR_generalized-1})
holds for any probability density in the $q>1$ case, and suggests
to investigate further the general case.

\section{\label{sec:The-generalized-Stam}The generalized Stam and Cramér-Rao
inequalities}

Proposition \ref{prop_cramer-Rao-inequality} relies on a result on
the minimization of the Rényi entropy power and on a result on the
minimization of the generalized Fisher information.  The fact that
the generalized Gaussian maximizes the Rényi entropy among all distributions
with a given elliptic moment is well known and can be derived by standard
calculus of variations. A precise statement under the form of a general
inequality is due to \citet{lutwak_moment-entropy_2007}. 
\begin{thm}
\label{theo_moment_entropy}{[}\citet{lutwak_moment-entropy_2007}
- Theorem 3{]} If $\alpha\in(0,\infty),$ $q>n/(n+\alpha),$ and f
is a probability density on random vectors of $\mathbb{R}^{n}$, with
$m_{\alpha}[f]<\infty$ its moment of order $\alpha$ and $N_{q}[f]<\infty$
its Rényi entropy power, then
\begin{equation}
\frac{m_{\alpha}[f]^{\frac{1}{\alpha}}}{N_{q}[f]^{\frac{1}{n}}}\geq\frac{m_{\alpha}[G]^{\frac{1}{\alpha}}}{N_{q}[G]^{\frac{1}{n}}},\label{eq:momententropy}
\end{equation}
 with equality if and only if $f$ is any generalized Gaussian in
(\ref{eq:qgauss_general}). 
\end{thm}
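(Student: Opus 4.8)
The plan is to reduce the statement to a comparison of the information generating function $M_q$ between $f$ and a suitably rescaled generalized Gaussian, and then to exploit the defining affine structure of the latter. The ratio $m_\alpha[f]^{1/\alpha}/N_q[f]^{1/n}$ is invariant under the dilation $f(x)\mapsto t^n f(tx)$, which rescales $m_\alpha$ and $N_q$ by matching powers of $t$; moreover the assumption $q>n/(n+\alpha)$ guarantees that every $G_\gamma$ is normalizable with finite $\alpha$-moment, and the ratio $m_\alpha[G_\gamma]^{1/\alpha}/N_q[G_\gamma]^{1/n}$ is the same for all $\gamma$. I may therefore fix the free scale so that $G=G_\gamma$ has the same elliptic moment as $f$, i.e. $m_\alpha[G]=m_\alpha[f]=:m$. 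With this matching, (\ref{eq:momententropy}) is equivalent to the single statement $N_q[f]\le N_q[G]$, that is, to the maximum Rényi entropy property of $G$ under a fixed $\alpha$-moment.

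The engine of the proof is the observation that, on its support, the generalized Gaussian satisfies $G(x)^{q-1}=Z(\gamma)^{-(q-1)}\bigl(1-(q-1)\gamma|x|^\alpha\bigr)$, so that $G^{q-1}$ is an affine function of $|x|^\alpha$. Combined with the tangent-line inequality for the power function $t\mapsto t^q$ — convex for $q>1$ and concave for $0<q<1$ — namely $f^q \gtrless q f\,G^{q-1}-(q-1)G^q$ (with $\ge$ for $q>1$ and $\le$ for $q<1$), this turns the comparison of $M_q[f]$ and $M_q[G]$ into a comparison of linear functionals. Integrating and using the affine form gives $\int f\,G^{q-1}\,dx = Z(\gamma)^{-(q-1)}\bigl(1-(q-1)\gamma\,m_\alpha[f]\bigr)=M_q[G]$ when $f$ is supported where $G$ is, so the tangent-line inequality collapses to $M_q[f]\gtrless M_q[G]$.

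The step I expect to require the most care is the support bookkeeping for $q>1$, where $G$ has compact support $B=\{(q-1)\gamma|x|^\alpha\le 1\}$ and $f$ may carry mass outside $B$. There the identity $\int f\,G^{q-1}=M_q[G]$ is replaced by an inequality, but the defect works in the right direction: on $B^c$ one has $(q-1)\gamma|x|^\alpha>1$, so the leftover contribution $q\,Z(\gamma)^{-(q-1)}\int_{B^c}\bigl((q-1)\gamma|x|^\alpha-1\bigr)f\,dx$ is nonnegative and only strengthens $M_q[f]\ge M_q[G]$. Since $N_q=M_q^{1/(1-q)}$ with $1/(1-q)<0$ for $q>1$ and $>0$ for $q<1$, in both regimes $M_q[f]\gtrless M_q[G]$ yields exactly $N_q[f]\le N_q[G]$, which is the claim.

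Finally, equality is read off from the equality cases just used. Strict convexity or concavity of $t\mapsto t^q$ forces $f=G$ almost everywhere on $\{G>0\}$ in the tangent-line step, while the nonnegative $B^c$-defect (for $q>1$) must vanish, forcing $f$ to be supported in $B$; hence equality holds iff $f=G$, and by the scale invariance used at the outset this means $f$ is any generalized Gaussian $G_\gamma$. The borderline $q=1$ is recovered either by continuity in $q$ or directly from the Gibbs inequality $\int f\log(f/G)\,dx\ge 0$, using that $\log G$ is affine in $|x|^\alpha$.
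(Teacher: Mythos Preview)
The paper does not prove this theorem: it is quoted as Theorem~3 of Lutwak--Yang--Zhang (2007) and used as a black box, with only the remark that the maximum-entropy characterization ``can be derived by standard calculus of variations.'' So there is no proof in the paper to compare your argument against.

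That said, your argument is correct and is essentially the classical Gibbs/variational proof of this moment--entropy inequality. The reduction by scale invariance to $N_q[f]\le N_q[G]$ under the matching $m_\alpha[f]=m_\alpha[G]$ is sound (your dilation computation is right and the scaling identities in the Appendix confirm that the right-hand side is $\gamma$-independent). The tangent-line inequality $f^q\gtrless q f G^{q-1}-(q-1)G^q$ together with the affine form $G^{q-1}=Z^{-(q-1)}\bigl(1-(q-1)\gamma|x|^\alpha\bigr)$ on $\{G>0\}$ indeed collapses $\int f G^{q-1}$ to $M_q[G]$ once the $\alpha$-moments are equal. Your handling of the compact-support case $q>1$ is also correct: extending the affine expression to $B^c$ and subtracting shows $\int_B f G^{q-1}=M_q[G]+Z^{-(q-1)}\int_{B^c}\bigl((q-1)\gamma|x|^\alpha-1\bigr)f\,\mathrm{d}x\ge M_q[G]$, which only strengthens $M_q[f]\ge M_q[G]$ and hence $N_q[f]\le N_q[G]$. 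The equality analysis (strict convexity/concavity of $t\mapsto t^q$ on $\{G>0\}$, and vanishing of the nonnegative defect on $B^c$) is complete, and the $q=1$ case via the Gibbs inequality is the standard one. One cosmetic point: you may want to note explicitly that for $q<1$ the density $G$ has full support, so the integrand $fG^{q-1}$ is finite and integrable precisely because $m_\alpha[f]<\infty$; and for $q>1$ with $M_q[f]=\infty$ the inequality is trivial since $N_q[f]=0$.
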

As a direct consequence we indeed have the maximum entropy characterization
of generalized Gaussians: among all probability densities with a given
moment, e.g. $m_{\alpha}[f]=m$, we have $N_{q}[f]\leq N_{q}[G_{\gamma}]$
with $\gamma$ such that $m_{\alpha}[G_{\gamma}]=m$. 

The second important result is the fact that the minimum of the generalized
Fisher information, among all probability densities with a given Rényi
entropy, is reached by a generalized Gaussian. This generalizes the
fact that the minimum of the standard Fisher information for probability
densities with a given (Shannon) entropy is attained by the standard
normal distribution, as shown by Stam's inequality. Actually, the
statement on generalized Fisher information relies on a generalized
version of Stam's inequality. We will use a remarkable general result
of \citet{agueh_sharp_2008} that gives the optimal functions in some
special cases of sharp Gagliardo-Nirenberg inequalities. For $n\geq2,$
Agueh's result recovers the results of \citet{del_pino_optimal_2003},
while the case $n=1$ can be obtained from a sharp Gagiardo-Nirenberg
inequality on the real line established by \citet{b._sz.-nagy_ueber_1941}.
It turns out that these optimal functions are the generalized Gaussians.
Let us first recall Agueh's result. 
\begin{thm}
{[}\citet{agueh_sharp_2008}- Theorem 2.1, corollary 3.4{]} For $n,p,r$
and $s$ such that
\begin{flalign}
n>p>1 & \mathrm{\,\,\, and\,\,} & \frac{np}{n-p}>s>r\geq1 & \mathrm{\,\,\, if\,\,\,} & n>1\label{eq:ConditionsGagliardo-Nirenberg}\\
p>1 & \mathrm{\,\,\, and\,\,} & \infty>s\geq r\geq1 & \mathrm{\,\,\, if\,\,\,} & n=1\nonumber 
\end{flalign}
and if $u(x)$ is a function defined on $\mathbb{R}^{n}$ such that
the involved norms are finite, then the following sharp Gagiardo-Nirenberg
inequality holds
\begin{equation}
K\,\left\Vert \left|\nabla u\right|\right\Vert _{p}^{\theta}\,\left\Vert u\right\Vert _{r}^{1-\theta}\geq\left\Vert u\right\Vert _{s}\label{eq:Gagliardo-Nirenberg}
\end{equation}
where $K$ is an optimum constant, and 
\[
\theta=\frac{np(s-r)}{s\left[np-r(n-p)\right]}.
\]
Let $p^{*}=p/(p-1)$ denote the Hölder conjugate of $p$. \end{thm}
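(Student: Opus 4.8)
The plan is to read \eqref{eq:Gagliardo-Nirenberg} as the variational statement that a scale--invariant quotient is minimized at a generalized Gaussian, and to prove it by an optimization argument whose extremal is pinned down through a $p$-Laplacian Euler--Lagrange equation, exactly of the type already encountered in Section~\ref{sec:On-the-generalized}. \textbf{First}, I would rewrite the claim as the assertion that $Q[u]=\|\,|\nabla u|\,\|_{p}^{\theta}\,\|u\|_{r}^{1-\theta}/\|u\|_{s}$ has a strictly positive infimum $1/K$ which is attained. The functional $Q$ is automatically invariant under the rescaling $u\mapsto a\,u$, and the specific value of $\theta$ is \emph{precisely} the one that also makes it invariant under the dilation $u(x)\mapsto u(bx)$; this two--parameter invariance lets me normalize $\|u\|_{r}=\|u\|_{s}=1$ (solvable because $r\neq s$), so that minimizing $Q$ is equivalent to minimizing $\|\,|\nabla u|\,\|_{p}^{p}$ under these two constraints.

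\textbf{Next}, I would reduce to radial, nonincreasing competitors. Replacing $u$ by its symmetric decreasing rearrangement $u^{*}$ preserves every $L^{q}$ norm while the P\'olya--Szeg\H{o} inequality gives $\|\,|\nabla u^{*}|\,\|_{p}\le\|\,|\nabla u|\,\|_{p}$; hence $Q[u^{*}]\le Q[u]$ and a minimizer may be sought among radial decreasing functions. On this class the problem is one--dimensional and its Euler--Lagrange equation is a $p$-Laplacian equation $\triangle_{p}u=c\,u^{s-1}-d\,u^{r-1}$ with Lagrange multipliers $c,d$, that is, an Emden--Fowler type equation of the same nature as \eqref{eq:emden-fowler-sca}.

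\textbf{Then} I would identify the extremal. As was verified after \eqref{eq:emden-fowler-sca}, a suitable power of the generalized Gaussian $G_{\gamma}$ (here with $\alpha=p^{*}$) solves this radial equation; appealing to the uniqueness of positive radial solutions with the prescribed decay or compact--support behaviour---standard for $p$-Laplacian ground states---shows the extremal is unique modulo the invariances $u\mapsto a\,u(bx)$. Substituting it back into $Q$ evaluates the optimal constant $K$ and establishes that equality in \eqref{eq:Gagliardo-Nirenberg} holds if and only if $u$ is a dilate of a generalized Gaussian. For $n=1$ this whole scheme collapses to the classical one--dimensional sharp inequality of \citet{b._sz.-nagy_ueber_1941}, consistent with the separate line in \eqref{eq:ConditionsGagliardo-Nirenberg}.

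\textbf{The main obstacle} is existence of the minimizer on the noncompact domain $\mathbb{R}^{n}$: minimizing sequences may lose mass by spreading to infinity or by concentrating, so the bare direct method fails. I would overcome this with Lions' concentration--compactness principle, using the scaling invariance together with the strict subcriticality $s<np/(n-p)$ to exclude vanishing and dichotomy, thereby recovering compactness up to translation and dilation and hence an attained minimizer. A route that bypasses compactness altogether is the mass--transportation method of Cordero-Erausquin, Nazaret and Villani: one takes the Brenier map $T=\nabla\varphi$ for the cost $|x-y|^{p^{*}}/p^{*}$ pushing the normalized density built from $u$ onto the candidate extremal, and then the Monge--Amp\`ere equation, the arithmetic--geometric mean inequality applied to the eigenvalues of $D^{2}\varphi$, an integration by parts, and a final H\"older step reproduce \eqref{eq:Gagliardo-Nirenberg} with the sharp $K$ and with equality exactly for the generalized Gaussians.
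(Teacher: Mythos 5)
This statement is not proved in the paper at all: it is quoted verbatim as an external result of \citet{agueh_sharp_2008} (with the $n=1$ case traced to \citet{b._sz.-nagy_ueber_1941}), and the paper only \emph{uses} it to derive the generalized Stam inequality. So there is no internal proof to compare yours against; what you have written is an outline of how one would prove the cited theorem from scratch. As such an outline it is essentially sound and follows the standard modern playbook: your computation of $\theta$ as the unique exponent making the quotient dilation-invariant is correct, the reduction to radial decreasing competitors via P\'olya--Szeg\H{o} is the right first step, the Euler--Lagrange equation is indeed a $p$-Laplacian equation of Emden--Fowler type matching \eqref{eq:emden-fowler-sca}, and you correctly identify loss of compactness as the main obstruction, with concentration--compactness or mass transport as the two viable cures. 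Two caveats. First, the step you dispatch as ``standard for $p$-Laplacian ground states'' --- uniqueness of the positive radial solution modulo the invariances --- is in fact where the bulk of the difficulty lives (it is the content of Serrin--Tang-type theorems and of the Del Pino--Dolbeault analysis), and it is only for the two special exponent relations $r=1+s/p^{*}$ and $r=p^{*}(s-1)$ that the extremals come out explicitly as in items (a) and (b); your sketch should make clear that the theorem is sharp only on these one-parameter families, not for arbitrary $r<s$. Second, in the mass-transport route the Brenier map is the gradient of a convex function for the \emph{quadratic} cost; the conjugate exponent $p^{*}$ enters through the Young/H\"older step applied to $\nabla u\cdot T$, not through a cost $|x-y|^{p^{*}}/p^{*}$. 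Neither point invalidates the strategy, but both would need to be repaired in a full write-up.
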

\begin{lyxlist}{00.00.0000}
\item [{(a)}] If $r=1+s/p^{*},$ then the extremal functions have the form
$Cu(\sigma(x-\bar{x}))$ where $C,\sigma$ and $\bar{x}$ are arbitrary,
and 
\[
u(x)=\left(1+|x|^{p^{*}}\right)^{\frac{p}{p-s}}.
\]

\item [{(b)}] If $r=p^{*}\left(s-1\right),$ then the extremal functions
have the form $Cu(\sigma(x-\bar{x}))$ where $C,\sigma$ and $\bar{x}$
are arbitrary, and
\[
u(x)=\left(1-|x|^{p^{*}}\right)_{+}^{\frac{p-1}{p-s}}.
\]

\end{lyxlist}
This result enables to obtain a generalization of Stam's inequality
in $\mathbb{R}^{n},$ $n\geq1,$ involving the $q$-entropy power
and the generalized $(\beta,q)$-Fisher information. 
\begin{prop}
\label{Prop_GeneralizedStam}For $n\geq1,$ $\beta$ and $\alpha$
Hölder conjugates of each other, $\alpha>1,$ $q>\max\left\{ (n-1)/n,n/(n+\alpha)\right\} $
then for any probability density on $\mathbb{R}^{n}$, supposed continuously
differentiable, 
\begin{equation}
N_{q}[f]I_{\beta,q}[f]^{\frac{n}{\beta\lambda}}\geq N_{q}[G]I_{\beta,q}[G]^{\frac{n}{\beta\lambda}},\label{eq:GeneralizedStam}
\end{equation}
with $\lambda=n(q-1)+1,$ and with equality if and only if $f$ is
any generalized Gaussian $G_{\gamma}$ in (\ref{eq:qgauss_general}).\end{prop}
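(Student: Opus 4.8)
The plan is to convert the generalized Stam inequality into the sharp Gagliardo--Nirenberg inequality (\ref{eq:Gagliardo-Nirenberg}) through the substitution $f=u^{k}$, with $k=\beta/(\beta(q-1)+1)$, already used in (\ref{eq:GenFisherDefinition_reduced}). Under this change of variable the three quantities entering (\ref{eq:GeneralizedStam}) all become $L^{p}$-type norms of $u$: the normalization $\int f\,\mathrm{d}x=1$ reads $\Vert u\Vert_{k}=1$; the information generating function is $M_{q}[f]=\int u^{kq}\,\mathrm{d}x=\Vert u\Vert_{kq}^{kq}$, so that $N_{q}[f]=\Vert u\Vert_{kq}^{kq/(1-q)}$; and by (\ref{eq:GenFisherDefinition_reduced}) one has $I_{\beta,q}[f]=|k|^{\beta}\Vert\,|\nabla u|\,\Vert_{\beta}^{\beta}$. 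I would then apply (\ref{eq:Gagliardo-Nirenberg}) with $p=\beta$, so that $p^{*}=\alpha$, and with $r,s$ chosen to reproduce exactly these norms.

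The correct choice of $r$ and $s$ splits into two cases according to the sign of $q-1$, and this is the point that must be handled with care. For $q>1$ I would set $r=k$, $s=kq$; a direct computation gives $\beta-kq=\beta(\beta-1)(q-1)/(\beta(q-1)+1)$, whence the case-(b) condition $r=p^{*}(s-1)$ holds, and the case-(b) exponent $(p-1)/(p-s)$ equals $1/(k(q-1))$, matching the power of $G_{\gamma}^{1/k}$ from (\ref{eq:qgauss_general}). For $q<1$ the ordering forces the opposite assignment $r=kq$, $s=k$: one checks that $1+k/\alpha=kq$, so the case-(a) condition $r=1+s/p^{*}$ holds, and the case-(a) exponent $p/(p-s)$ again equals $1/(k(q-1))$. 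In both cases the extremal $u$ of (\ref{eq:Gagliardo-Nirenberg}) is, up to the scaling $Cu(\sigma(x-\bar{x}))$, precisely $G_{\gamma}^{1/k}$, which will furnish the equality characterization.

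With these choices, and using $\Vert u\Vert_{k}=1$, the inequality (\ref{eq:Gagliardo-Nirenberg}) reduces to $K\Vert\,|\nabla u|\,\Vert_{\beta}^{\theta}\geq\Vert u\Vert_{kq}$ when $q>1$, and to the corresponding lower bound on $\Vert u\Vert_{kq}$ when $q<1$, since then $\Vert u\Vert_{s}=\Vert u\Vert_{k}=1$ sits on the larger side. Substituting the norm expressions and raising to the power $kq$ yields a bound on $M_{q}[f]$ in terms of $I_{\beta,q}[f]$; raising further to the power $1/(1-q)$ — which reverses the inequality precisely when $q>1$ and preserves it when $q<1$, so that both cases produce the same lower bound on $N_{q}[f]$ — converts it into $N_{q}[f]I_{\beta,q}[f]^{n/(\beta\lambda)}\geq\mathrm{const}$. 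The only nontrivial algebra is to check that the accumulated exponent of $I_{\beta,q}[f]$ is exactly $n/(\beta\lambda)$; using the value $\theta=n(q-1)(\beta(q-1)+1)/(q\beta\lambda)$ from the theorem together with $k(\beta(q-1)+1)=\beta$ and $\lambda=n(q-1)+1$, this exponent collapses to $-\theta kq/(\beta(1-q))=n/(\beta\lambda)$, as required.

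It then remains to pin down the constant and the equality case. Since (\ref{eq:Gagliardo-Nirenberg}) is sharp with extremal $G_{\gamma}^{1/k}$, equality holds throughout if and only if $f=G_{\gamma}$, and the constant on the right of (\ref{eq:GeneralizedStam}) must equal the value of the scale-invariant functional $N_{q}[f]I_{\beta,q}[f]^{n/(\beta\lambda)}$ at $f=G$, which is finite and independent of $\gamma$ by the scaling relations. The main obstacle I anticipate is purely bookkeeping: verifying that the hypotheses (\ref{eq:ConditionsGagliardo-Nirenberg}) of Agueh's theorem are met under the stated range $q>\max\{(n-1)/n,\,n/(n+\alpha)\}$ — in particular that $q>(n-1)/n$ is what guarantees $\lambda>0$, so the exponent $n/(\beta\lambda)$ is well defined and the power $1/(1-q)$ is applied in the right direction, while $q>n/(n+\alpha)$ secures the integrability and the admissible ordering $r,s\geq1$ needed for (\ref{eq:Gagliardo-Nirenberg}) to apply.
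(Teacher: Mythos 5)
Your substitution $f=u^{k}$ with $k=\beta/(\beta(q-1)+1)$ is exactly the paper's change of variable $u=f^{t}$ with $t=1/k$, and your case assignments ($r=k$, $s=kq$ under case (b) for $q>1$; $r=kq$, $s=k$ under case (a) for $q<1$) coincide with the paper's choices $st=q,\ rt=1$ and $st=1,\ rt=q$ respectively; the exponent bookkeeping you carry out checks out, so for $q\neq1$ this is essentially the paper's proof. The one genuine omission is the boundary case $q=1$, which is included in the hypothesis since $\max\{(n-1)/n,\,n/(n+\alpha)\}<1$: there your two cases degenerate ($s=r=k$, $\theta=0$, and the power $1/(1-q)$ is undefined), and the paper handles it separately by exponentiating the sharp $L^{p}$ logarithmic Sobolev inequality (the limit $s\downarrow r$ of Gagliardo--Nirenberg, cf.\ Del Pino--Dolbeault and Gentil), with equality again for the $q=1$ generalized Gaussians. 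You should add that limiting argument to cover $q=1$.
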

\begin{proof}
The result mainly follows from Agueh's theorem above. Take $\beta=p,$
$\alpha=p^{*}$ and let $u(x)=f(x)^{t}.$ 

Let us first consider case (a), and choose the exponent $t>0$ such
that $st=1.$ Observe that the condition $s>r$ gives $s>\beta.$
Finally, let us denote $q=rt=(1+s/\alpha)t$. In such conditions,
the sharp Gagliardo-Nirenberg inequality (\ref{eq:Gagliardo-Nirenberg})
becomes
\[
K\,\left(\int f(x)^{q}\mathrm{d}x\right)^{\frac{(1-\theta)t}{q}}\left(\int f(x)^{\beta t}\left(\frac{|\nabla f(x)|}{f(x)}\right)^{\beta}\mathrm{d}x\right)^{\frac{\theta}{\beta}}\geq1,
\]
since $f$ is a probability density. The relationship $q=(1+s/\alpha)t=t+1-1/\beta$
implies that $\beta t=\beta(q-1)+1$. Hence, we see that the simple
change of variable $u(x)=f(x)^{t}$ yields a relation that involves
both the information generating function and the generalized Fisher
information. Furthermore, we have $\beta/(\beta-s)=\beta t/(\beta t-1)=t/(q-1)<0$
so that the optimum function $f(x)=u(x)^{1/t}$ is nothing but the
generalized Gaussian. Finally, the direct computation of the exponents
leads to 
\[
\left(M_{q}[f]^{\frac{1}{1-q}}I_{\beta,q}[f]^{\frac{n}{\beta\left[n(q-1)+1\right]}}\right)^{a}\, K\geq1
\]
with $1>q>\max\left\{ (n-1)/n,n/(n+\alpha)\right\} $ and $a=t\frac{\left[n(\beta(q-1)+1)-(n-\beta)\right]}{n\left[\beta(q-1)+1\right]-q(n-\beta)}>0$,
and where we note that $M_{q}[f]^{\frac{1}{1-q}}=N_{q}[f]$. Taking
into account that the equality sign only holds for the generalized
Gaussians and that the corresponding expression is scale invariant
(as a consequence of (\ref{eq:ScaleEqualities})), we arrive at the
inequality (\ref{eq:GeneralizedStam}). 

The approach is similar in case (b) where $r=\alpha\left(s-1\right).$
Observe that the condition $s>r$ gives here $s<\beta.$ Take $st=q$,
with $t>0,$ and $rt=\alpha t\left(s-1\right)=1$. These two equalities
give $\beta t=\beta\left(q-1\right)+1$ which, since $\beta>s$, gives
$q>1$. Hence, we obtain that $f(x)=u(x)^{1/t}$ is the generalized
Gaussian with exponent $1/(q-1)$ and compact support. Finally, the
simplification of the Gagliardo-Nirenberg inequality shows that the
inequality (\ref{eq:GeneralizedStam}) holds. 

It remains to examine the case $q=1$ in (\ref{eq:GeneralizedStam}).
In this case, the result can be obtained as a direct consequence of
an Lp log-Sobolev inequality, which can be viewed as the limit case
of the Sharp Gagliardo-Nirenberg inequality when $p\downarrow s$,
cf. \citet{del_pino_optimal_2003}. For instance, exponentiating the
Lp logarithmic Sobolev inequality in \citet[eq. (1)]{gentil_general_2003}
\[
\int |f|^p\log |f|^p \text{d}x \leq \frac{n}{p}\log \left(K_p \int |\nabla f|^p \text{d}x\right)
\]
where $K_p$ is an optimal constant, and using the change of variable $f=|f|^{p}$ directly gives the generalized
Stam inequality (\ref{eq:GeneralizedStam}), with equality for the
generalized Gaussians with $q=1.$ \end{proof}
\begin{rem*}
Let us mention that the case $q=1$ has already be established as
a direct consequence of the Lp logarithmic Sobolev inequality in the
recent paper by \citet{kitsos_logarithmic_2009}. Let us also note
that another kind of inequality with a generalized Gaussian extremal
can also be derived from the standard Sobolev inequality (that is
the inequality (\ref{eq:Gagliardo-Nirenberg}) with $\theta=1).$ 

The monodimensional case of (\ref{eq:GeneralizedStam}) has be derived
in a very elegant way in the work by Lutwak, Yang and Zhang in \citet{lutwak_Cramer_2005}.
It can also be derived from \citet{b._sz.-nagy_ueber_1941}'s sharp
Gagliardo-Nirenberg inequality on $\mathbb{R}$. 
\end{rem*}
Finally, we obtain that the generalized Fisher information $I_{\beta,q}[f]$
and the elliptic moments of order $\alpha$ $m_{\alpha}[f]$ are indeed
linked by a Cramér-Rao inequality, in all generality. The related
inequality generalizes the standard Cramér-Rao inequality for the
location parameter. 
\begin{thm}
{[}Cramér-Rao inequality{]} For $n\geq1,$ $\beta$ and $\alpha$
Hölder conjugates of each other, $\alpha>1,$ $q>\max\left\{ (n-1)/n,n/(n+\alpha)\right\} $
then for any probability density $f$ on $\mathbb{R}^{n}$, supposed
continuously differentiable and such that the involved information
measures are finite, we have 
\begin{equation}
I_{\beta,q}[f]^{\frac{1}{\beta\lambda}}m_{\alpha}[f]^{\frac{1}{\alpha}}\geq I_{\beta,q}[G]^{\frac{1}{\beta\lambda}}m_{\alpha}[G]^{\frac{1}{\alpha}},\label{eq:CR_generalized}
\end{equation}
with $\lambda=n(q-1)+1$, and where the equality holds if and only
if $f$ is a generalized Gaussian $f=G_{\gamma}$. \end{thm}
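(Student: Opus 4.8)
The plan is to derive the Cram\'er-Rao inequality (\ref{eq:CR_generalized}) by chaining together the two results established just above it, namely the moment-entropy inequality of Theorem \ref{theo_moment_entropy} and the generalized Stam inequality of Proposition \ref{Prop_GeneralizedStam}. The key observation is that both inequalities feature the $q$-entropy power $N_q[f]$: it sits in the denominator of the moment-entropy ratio and can be paired with the factor $N_q[f]$ occurring in the Stam bound. The strategy is thus to combine the two so that the entropy power cancels, leaving precisely a relation between the generalized Fisher information $I_{\beta,q}$ and the elliptic moment $m_\alpha$, with no reference to $N_q$.

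Concretely, I would first rewrite Theorem \ref{theo_moment_entropy} in the form
\[
m_\alpha[f]^{\frac{1}{\alpha}} \geq \frac{m_\alpha[G]^{\frac{1}{\alpha}}}{N_q[G]^{\frac{1}{n}}}\, N_q[f]^{\frac{1}{n}},
\]
and multiply both sides by the positive quantity $I_{\beta,q}[f]^{\frac{1}{\beta\lambda}}$. Then I would take the $\tfrac1n$-th power of the generalized Stam inequality (\ref{eq:GeneralizedStam}) to get $N_q[f]^{\frac{1}{n}} I_{\beta,q}[f]^{\frac{1}{\beta\lambda}} \geq N_q[G]^{\frac{1}{n}} I_{\beta,q}[G]^{\frac{1}{\beta\lambda}}$, and substitute this lower bound into the right-hand side of the previous step. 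The factor $N_q[G]^{-1/n}$ cancels the $N_q[G]^{1/n}$ coming from the Stam bound, yielding
\[
m_\alpha[f]^{\frac{1}{\alpha}}\, I_{\beta,q}[f]^{\frac{1}{\beta\lambda}} \geq m_\alpha[G]^{\frac{1}{\alpha}}\, I_{\beta,q}[G]^{\frac{1}{\beta\lambda}},
\]
which is exactly (\ref{eq:CR_generalized}). The exponent $\tfrac{1}{\beta\lambda}$ in the target is supplied directly by the Stam step, so no further manipulation of powers is required.

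For the equality case, I would argue that equality in the chained bound forces equality in each of the two input inequalities separately. By Theorem \ref{theo_moment_entropy} and Proposition \ref{Prop_GeneralizedStam}, each of these holds if and only if $f$ is a generalized Gaussian $G_\gamma$; conversely, when $f=G_\gamma$ both equalities hold simultaneously, and the scaling relations (\ref{eq:ScaleEqualities}) guarantee that the right-hand side is independent of $\gamma$. Hence equality in (\ref{eq:CR_generalized}) holds if and only if $f$ is a generalized Gaussian, for any $\gamma>0$.

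Rather than a genuine obstacle, the only point requiring care is the compatibility of the hypotheses. Theorem \ref{theo_moment_entropy} asks merely that $\alpha\in(0,\infty)$ and $q>n/(n+\alpha)$, whereas Proposition \ref{Prop_GeneralizedStam} demands the stronger $\alpha>1$ and $q>\max\{(n-1)/n,\, n/(n+\alpha)\}$. Since the present theorem adopts this more restrictive range, both input results apply without modification, and it suffices to confirm that the finiteness assumptions on the information measures carry over to all quantities appearing in the two bounds. With that verified, the proof is simply the cancellation outlined above.
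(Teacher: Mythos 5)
Your proposal is correct and follows essentially the same route as the paper: the paper's own proof is exactly the term-by-term product of the moment-entropy inequality with the $1/n$-th power of the generalized Stam inequality, with the entropy powers cancelling and the equality case inherited from the two input inequalities. No gaps.
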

\begin{proof}
The result is an immediate consequence of the moment-entropy inequality
(\ref{eq:momententropy}) and of the generalized Stam inequality (\ref{eq:GeneralizedStam}):
the simple term by term product of both inequalities directly gives
(\ref{eq:CR_generalized}), with the same equality condition as in
the two initial inequalities. 
\end{proof}

\section{Conclusions}

In this paper, we have presented a generalized Fisher information
measure that fits well in the nonextensive thermostatistics context.
Indeed, just as the maximization of the generalized Rényi or Tsallis
entropies subject to a moment constraint yields a $q$-Gaussian distribution,
the minimization of the generalized Fisher information subject to
the same constraint also leads to the very same $q$-Gaussian distribution.
A generalized Cramér-Rao inequality corresponds to this result. Furthermore
a generalized Stam inequality links the generalized entropies and
the generalized Fisher information, with a lower bound attained, again,
by $q$-Gaussian distributions. All these results hold for probability
densities defined on $\mathbb{R}^{n}.$ Hence, these results complement
the classical characterization of the generalized $q$-Gaussian and
introduce a generalized Fisher information as a new information measure
associated with Rényi or Tsallis entropies. While this work was under consideration, 
a similar extension of the generalized Fisher information to the multidimensional case 
has been proposed by \citet{lutwak_extension-fisher_2012} in the context of information 
theory.

Future work will include the study of further properties of the generalized
Fisher information, namely its convexity properties. Here, the generalized
Fisher information has been introduced as the information attached
to the distribution. In estimation theory, the Fisher information
is defined with respect to a general parameter and characterize the
information about this parameter, as well as the estimation performances,
as exemplified by the classical Cramér-Rao bound in estimation theory.
Hence, it would be of interest to look at general estimation problems
that could involve a similar generalized Fisher information. In nonextensive
thermostatistics, the notion of escort distributions is an important
ingredient related to the generalized entropies. Thus, it would also
be of interest to see how these escort distributions can be introduced
in the present setting. 

\section*{Acknowledgments}
The author thanks the anonymous referee for his valuable comments and suggestions that helped improve the presentation
of this article. Thanks are extended to Lodie Garbell for her friendly proofreading of the
manuscript.

\begin{turnpage}

\end{turnpage}

\begin{ruledtabular}
\end{ruledtabular}
\begin{turnpage}

\end{turnpage}
\appendix

\section{\label{sec:Information-measures-of}Information measures of generalized
Gaussians}

In order to get the expressions of the information measures associated
to the generalized Gaussian (\ref{eq:qgauss_general}), that in turn
give the explicit expressions of the bounds in (\ref{eq:CR_generalized-1}),
(\ref{eq:momententropy}) and (\ref{eq:GeneralizedStam}), (\ref{eq:CR_generalized}),
we use the following result:
\begin{prop}
Let 
\[
\mu_{p,\nu}=\int|x|^{p}\left(1-s\gamma|x|^{\alpha}\right)_{+}^{\frac{\nu}{s}}\text{d}x
\]
with $\alpha,\gamma>0.$ By direct calculation in polar coordinates,
one gets 
\begin{alignat}{1}
\mu_{p,\nu} & =\frac{2}{\alpha}\left(\gamma\right)^{-\frac{p+n}{\alpha}}n\,\omega_{n}\times\nonumber \\
 & \begin{cases}
(-s)^{-\frac{p+n}{\alpha}}B\left(\frac{p+n}{\alpha},-\frac{\nu}{s}-\frac{p+n}{\alpha}\right) & \text{for }-\frac{\nu\alpha}{\left(p+n\right)}<s<0\\
s^{-\frac{p+n}{\alpha}}B\left(\frac{p+n}{\alpha},\frac{\nu}{s}+1\right) & \text{for }s>0\\
\left(\nu\right)^{-\frac{p+n}{\alpha}}\Gamma\left(\frac{p+n}{\alpha}\right) & \text{if }s=0
\end{cases}\label{eq:general_moment_pnu}
\end{alignat}
 where $\omega_{n}$ is the volume of the $n$-dimensional ball.
\end{prop}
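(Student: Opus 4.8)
The plan is to collapse the $n$-dimensional integral to a single radial integral and then to recognize the latter as a Beta or Gamma integral after one monomial substitution. First I would pass to polar coordinates exactly as in the main text, writing $\mathrm{d}x = r^{n-1}\,\mathrm{d}r\,\mathrm{d}u$ with $\int\mathrm{d}u = n\,\omega_n$, so that
\[
\mu_{p,\nu} = n\,\omega_n \int_0^\infty r^{n+p-1}\left(1-s\gamma r^\alpha\right)_+^{\frac{\nu}{s}}\,\mathrm{d}r .
\]
The integrand depends on $r$ only through the monomial $r^\alpha$, so the natural move is the change of variable $t = |s|\gamma r^\alpha$ (respectively $t=\nu\gamma r^\alpha$ when $s=0$), under which $r^{n+p-1}\,\mathrm{d}r$ contributes the exponent $\tfrac{p+n}{\alpha}-1$ on $t$ together with a Jacobian factor and the power $\gamma^{-(p+n)/\alpha}$. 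Collecting these assembles the common prefactor displayed in (\ref{eq:general_moment_pnu}) and leaves a dimensionless integral in $t$ whose form depends on the sign of $s$.

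Then I would treat the three sign regimes separately, since they correspond to genuinely different integrals. For $s>0$ the factor $(1-s\gamma r^\alpha)_+$ is supported on $r\le(s\gamma)^{-1/\alpha}$, so $t$ ranges over $[0,1]$ and the remaining integral is $\int_0^1 t^{\frac{p+n}{\alpha}-1}(1-t)^{\nu/s}\,\mathrm{d}t = B\!\left(\tfrac{p+n}{\alpha},\tfrac{\nu}{s}+1\right)$, finite precisely when $\tfrac{\nu}{s}+1>0$. For $s<0$ I would write $s=-|s|$, so that the integrand becomes $(1+|s|\gamma r^\alpha)^{-\nu/|s|}$, defined on all of $[0,\infty)$; the same substitution then yields $\int_0^\infty t^{\frac{p+n}{\alpha}-1}(1+t)^{-\nu/|s|}\,\mathrm{d}t$, which I would evaluate with the standard identity $\int_0^\infty t^{a-1}(1+t)^{-(a+b)}\,\mathrm{d}t = B(a,b)$, giving $B\!\left(\tfrac{p+n}{\alpha},-\tfrac{\nu}{s}-\tfrac{p+n}{\alpha}\right)$. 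Here the tail decays like $r^{-\alpha\nu/|s|}$, so integrability at infinity forces $n+p < -\alpha\nu/s$, i.e.\ exactly the stated window $-\tfrac{\nu\alpha}{p+n}<s<0$.

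Finally, the case $s=0$ I would obtain either by direct computation, using $t=\nu\gamma r^\alpha$ and the definition $\int_0^\infty t^{a-1}e^{-t}\,\mathrm{d}t = \Gamma(a)$ after noting $\lim_{s\to0}(1-s\gamma r^\alpha)^{\nu/s}=e^{-\nu\gamma r^\alpha}$, or as a consistency check by letting $s\to0^+$ in the $s>0$ formula and invoking $s^{-a}B(a,\tfrac{\nu}{s}+1)\to\nu^{-a}\Gamma(a)$, which follows from the asymptotics of $\Gamma$. The calculation itself is elementary; the only real care is in the bookkeeping, namely tracking the powers of $\gamma$ and $|s|$ that combine into the common prefactor and, above all, recording the convergence conditions so that each Beta or Gamma representation is legitimate. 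I would verify the overall numerical constant independently at the end, since prefactors of this type are the easiest place to drop or introduce a spurious factor.
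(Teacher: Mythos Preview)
Your proposal is correct and follows precisely the route the paper indicates: the paper gives no detailed argument beyond the phrase ``by direct calculation in polar coordinates,'' and you have supplied exactly that calculation, with the appropriate monomial substitution in each sign regime and the standard Beta/Gamma identifications. Your caution about bookkeeping the prefactor is well placed, since the displayed constant $\tfrac{2}{\alpha}\,n\,\omega_n$ in the paper appears to carry a spurious factor of $2$ relative to what the computation actually yields.
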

From this expression, we immediately identify that the partition function
is $Z(\gamma)=\mu_{0,1},$ with $s=q-1$, that is 
\begin{equation}
Z(\gamma)=\frac{2}{\alpha}\left(\gamma\right)^{-\frac{n}{\alpha}}n\,\omega_{n}\times\begin{cases}
(1-q)^{-\frac{n}{\alpha}}B\left(\frac{n}{\alpha},-\frac{1}{q-1}-\frac{n}{\alpha}\right) & \text{for }1-\frac{\alpha}{n}<q<1\\
(q-1)^{-\frac{n}{\alpha}}B\left(\frac{n}{\alpha},\frac{1}{q-1}+1\right) & \text{for }q>1\\
\Gamma\left(\frac{n}{\alpha}\right) & \text{if }q=1.
\end{cases}\label{eq:GenPartitionFunction}
\end{equation}
 Similarly, we obtain the information generating function
\begin{equation}
M_{q}[G_{\gamma}]=\int G_{\gamma}(x)^{q}\text{d}x=\frac{\int g_{\gamma}(x)^{q}\text{d}x}{\left(\int g_{\gamma}(x)\text{d}x\right)^{q}}=\frac{\mu_{0,q}}{\left(\mu_{0,1}\right)^{q}},\label{eq:InfoGeneratingFunction}
\end{equation}
with $s=q-1$. The information generating function, and thus the associated
Rényi and Tsallis entropies are finite for $q>n/(n+\alpha).$ 

Likewise, the moment of order $p$ is given by
\begin{equation}
m_{p}[G_{\gamma}]=\frac{\int|x|^{p}g_{\gamma}(x)\text{d}x}{\int g_{\gamma}(x)\text{d}x}=\frac{\mu_{p,1}}{\mu_{0,1}}.
\end{equation}
If $p=\alpha,$ by the properties of the Beta functions, the expressions
for the moment of order $p$ simplifies into
\begin{equation}
m_{\alpha}[G_{\gamma}]=\frac{\mu_{\alpha,1}}{\mu_{0,1}}=\frac{n}{\alpha}\frac{1}{\gamma(q-1)\left(\frac{1}{q-1}+\frac{n}{\alpha}+1  \right)}\,\text{for }q>n/(n+\alpha)\label{eq:generalized_nu_moment_p=00003D00003Dalpha}
\end{equation}

Let us now consider the generalized Fisher information $I_{\beta,q}[f]$
defined in (\ref{eq:GenFisherDefinition}). For a radially symmetric
function, that is $f(x)=f(|x|)=f(r),$ we simply have $|\nabla f(x)|=\frac{\mathrm{d}f(r)}{\mathrm{d}r},$
and in the case of the generalized Gaussian (\ref{eq:qgauss_general}),
we obtain that $\left|{G}_{\gamma}'/G_{\gamma}\right|=\left|{g}_{\gamma}'/g_{\gamma}\right|=\alpha\gamma|x|^{\alpha-1}\left(1-\gamma(q-1)|x|^{\alpha}\right)_{+}^{-1}$
for $q\neq1$, so that the generalized Fisher information has the
expression
\begin{alignat}{1}
I_{\beta,q}[G_{\gamma}] & =\frac{\left(\alpha\gamma\right)^{\beta}}{\left(\mu_{0,1}\right)^{\beta(q-1)+1}}\int|x|^{\alpha}\left(1-\gamma(q-1)|x|^{\alpha}\right)_{+}^{\frac{\beta(q-1)+1}{(q-1)}-\beta}\mathrm{d}x\label{eq:general_expression_for_FisherA}\\
 & =\left(\alpha\gamma\right)^{\beta}\frac{\mu_{\alpha,1}}{\left(\mu_{0,1}\right)^{\beta(q-1)+1}}.\label{eq:general_expression_for_Fisher}
\end{alignat}
We easily obtain that (\ref{eq:general_expression_for_Fisher}) also
holds in the case $q=1.$ The explicit expression of the generalized
Fisher information in the case of the generalized Gaussian is therefore
\begin{align}
I_{\beta,q}[G_{\gamma}] & =\left(\alpha\right)^{\beta}\left(\frac{2}{\alpha}n\,\omega_{n}\right)^{\beta(1-q)}\left|(q-1)\right|^{-n\frac{\beta}{\alpha}(1-q)-1}\gamma^{\frac{\beta}{\alpha}\left(n\left(q-1\right)+1\right)}\nonumber \\
 & \times\begin{cases}
\frac{B\left(1+\frac{n}{\alpha},-\frac{q}{q-1}-\frac{n}{\alpha}\right)}{\left[B\left(\frac{n}{\alpha},-\frac{1}{q-1}-\frac{1}{\alpha}\right)\right]^{\beta(q-1)+1}} & \text{for }\text{max}\left\{ 1-\alpha,\frac{n}{n+\alpha}\right\} <q<1\\
\frac{B\left(1+\frac{n}{\alpha},\frac{q}{q-1}\right)}{\left[B\left(\frac{n}{\alpha},\frac{q}{q-1}\right)\right]^{\beta(q-1)+1}} & \text{for }q>1,
\end{cases}\label{eq:ExpressionFisherGen}
\end{align}
and, $\text{for }q=1$, 
\begin{equation}
I_{\beta,q}[G_{\gamma}]=\left(\alpha\right)^{\beta}\left(\frac{2}{\alpha}n\,\omega_{n}\right)^{\beta(1-q)}\!\gamma^{\frac{\beta}{\alpha}\left(n\left(q-1\right)+1\right)}\frac{n}{\alpha}\,\Gamma\!\left(\frac{n}{\alpha}\right)^{\beta(1-q)}.\label{eq:ExpressionFisherGenb}
\end{equation}
Finally, let us note that we have the following simple scaling identities:
\begin{equation}
\begin{cases}
M_{q}[G_{\gamma}]=\gamma^{\frac{n}{\alpha}(q-1)}M_{q}(G),\\
I_{\beta,q}[G_{\gamma}]=\gamma^{\frac{\beta}{\alpha}\left(n\left(q-1\right)+1\right)}I_{\beta,q}[G],\\
m_{\alpha}[G_{\gamma}]=\gamma^{-1}m_{\alpha}[G].
\end{cases}\label{eq:ScaleEqualities}
\end{equation}


\end{document}